\newtheorem{theorem}{Theorem}
\newtheorem{lemma}{Lemma}
\newcommand{\per}{\mbox{\rm Per}}
\renewcommand{\det}{\mbox{\rm det}}
\newcommand{\ignore}[1]{}
\newcommand{\FF}{F}
\newcommand{\bfy}{y}
\newcommand{\bfx}{x}
\newcommand{\bfz}{z}
\newcommand{\labell}{\label}
\newcommand{\bfi}{i}
\newcommand{\singlespacing}%
{\small\normalsize}
\newcommand{\bfone}{{\bf 1}}
\newlength{\boxwidth}
\newcommand{\boxer}[1]{\begin{center}
                        \fbox{\fbox{
                                \begin{minipage}{\boxwidth}
                                \vspace*{.2in}
\label{style}

                                #1
                                \vspace*{.2in}
                                \end{minipage}}}
                        \end{center}}
\title{A Simple Algorithm for Hamiltonicity}
\author{Hasan Abasi \hspace{1.0in}  Nader H. Bshouty\\
Department of Computer Science\\
 Technion, 32000}
\date{\today}
\begin{document}
\maketitle
\bibliographystyle{plain}

\begin{abstract}
We develop a new algebraic technique that solves the following
problem: Given a black box that contains an arithmetic circuit $f$
over a field of characteristic $2$ of degree~$d$. Decide whether
$f$, expressed as an equivalent multivariate polynomial, contains
a multilinear monomial of degree $d$.

This problem was solved by Williams \cite{W} and Bj\"orklund et. al. \cite{BHKK}
for a white box (the circuit is given as an input)
that contains arithmetic circuit. We show a simple black box
algorithm that solves the problem with the same time complexity.

This gives a simple randomized algorithm for the simple $k$-path
problem for directed graphs of the same time
complexity\footnote{$O^*(f(k))$ is $O(poly(n)\cdot f(k))$}
$O^*(2^k)$ as in \cite{W} and with reusing the same ideas from
\cite{BHKK} with the above gives another algorithm (probably not
simpler) for undirected graphs of the same time complexity
$O^*(1.657^k)$ as in \cite{B10,BHKK}.
\end{abstract}

\section{Introduction}
Given a graph $G$ on $n$ vertices, the $k$-{\it path problem} asks
whether $G$ contains a simple path of length $k$. For $k=n$ the
problem is the Hamiltonian path problem in graphs.

Williams proved in \cite{W}
\begin{theorem} The directed $k$-path problem can be solved in time
$O^*(2^k)$ by a randomized algorithm with constant one sided
error.
\end{theorem}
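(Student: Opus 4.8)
The plan is to reduce the directed $k$-path problem to detecting a degree-$k$ multilinear monomial in an explicit polynomial, and then to carry out that detection by an algebraic sieve over a group algebra of characteristic $2$. The whole computation will use $\operatorname{poly}(n,k)$ ring operations, each costing $O^{*}(2^{k})$, so the running time is $O^{*}(2^{k})$. For the reduction, let $G=(V,E)$ with $V=\{1,\dots,n\}$ and adjacency matrix $A$; introduce a variable $x_{v}$ for each vertex $v$ and a label variable $y_{v,i}$ for each vertex $v$ and position $i\in[k]$, and put
\[
 P \;=\; \bfone^{T}\, D_{1}\,A\,D_{2}\,A\cdots A\,D_{k}\,\bfone,\qquad D_{i}=\operatorname{diag}\bigl((x_{v}y_{v,i})_{v\in V}\bigr),
\]
so that $P=\sum_{W}\prod_{i=1}^{k}x_{v_{i}}y_{v_{i},i}$, where $W=(v_{1},\dots,v_{k})$ ranges over all walks of length $k$ in $G$. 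The monomial of a walk $W$ is multilinear in the $x$'s precisely when $v_{1},\dots,v_{k}$ are distinct, i.e.\ when $W$ is a simple $k$-path; and since the position $i$ is recorded by $y_{v_{i},i}$, distinct walks give distinct monomials, so distinct simple paths on the same vertex set cannot annihilate one another. (These labels are essential: without them two Hamilton paths of $G[S]$ would cancel over characteristic $2$, producing false negatives.) Grouping by the $x$-support $S$, the coefficient of $\prod_{v\in S}x_{v}$ equals $\beta_{S}(y)=\sum_{W:\,V(W)=S}\prod_{i}y_{v_{i},i}$, which is not identically zero iff $G[S]$ has a Hamilton path. Hence $P$ contains a monomial multilinear in the $x$-variables and of $x$-degree $k$ iff $G$ has a simple $k$-path, and $P$ is computed by an $O(kn^{2})$-gate arithmetic circuit (a product of $O(k)$ matrices).

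\emph{The algorithm.} Fix $\ell=O(\log k)$ with $2^{\ell}\ge 10k$ and work in $R=GF(2^{\ell})[\mathbb{Z}_{2}^{k}]$, whose elements are $2^{k}$-tuples over $GF(2^{\ell})$ with coordinatewise addition and $\mathbb{Z}_{2}^{k}$-convolution as multiplication; the convolution costs $O^{*}(2^{k})$ via the Walsh--Hadamard transform. Pick random $a_{v}\in GF(2^{\ell})$, random $r_{v}\in\mathbb{Z}_{2}^{k}$ and random $\hat y_{v,i}\in GF(2^{\ell})$, evaluate the circuit for $P$ under $x_{v}\mapsto a_{v}(e_{0}+r_{v})$ and $y_{v,i}\mapsto\hat y_{v,i}$ (here $e_{0}$ is the identity of $\mathbb{Z}_{2}^{k}$), and output YES iff the result is nonzero. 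This is $\operatorname{poly}(n,k)$ operations in $R$, hence $O^{*}(2^{k})$ time, and it may be repeated to amplify.

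\emph{Correctness.} For soundness, $(e_{0}+r)^{2}=e_{0}+2r+r\cdot r=e_{0}+0+e_{0}=0$ in $R$, since $r\cdot r=r\oplus r=e_{0}$ and $2e_{0}=0$; so any non-simple walk, which repeats some vertex $v$ and therefore carries a factor $(e_{0}+r_{v})^{\ge 2}=0$, evaluates to $0$, and if $G$ has no simple $k$-path the output is always NO. For completeness, suppose $G$ has a simple $k$-path and fix a vertex set $S^{\ast}$ admitting one. Only the $x$-multilinear degree-$k$ terms survive the substitution, so the value is $\sum_{|S|=k}\beta_{S}(\hat y)\bigl(\prod_{v\in S}a_{v}\bigr)\gamma_{S}$ with $\gamma_{S}=\prod_{v\in S}(e_{0}+r_{v})=\sum_{T\subseteq S}\bigoplus_{v\in T}r_{v}\in R$. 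Viewed as a polynomial in the $a_{v}$, the coefficient of the squarefree monomial $\prod_{v\in S^{\ast}}a_{v}$ is $\beta_{S^{\ast}}(\hat y)\,\gamma_{S^{\ast}}$. If $\{r_{v}\}_{v\in S^{\ast}}$ is linearly independent — probability $\prod_{i=1}^{k}(1-2^{-i})>0.28$ — then $\gamma_{S^{\ast}}$ is the sum of the $2^{k}$ distinct elements of the subgroup they span and is nonzero; and with probability $\ge 1-k/2^{\ell}$ over $\hat y$ we have $\beta_{S^{\ast}}(\hat y)\ne 0$ (Schwartz--Zippel, since $\beta_{S^{\ast}}\not\equiv 0$). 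On these events the polynomial in the $a_{v}$ is not identically zero, so applying Schwartz--Zippel in a coordinate of $R$ in which $\beta_{S^{\ast}}(\hat y)\gamma_{S^{\ast}}$ is nonzero, a random choice of the $a_{v}$ makes the value nonzero with probability $\ge 1-k/2^{\ell}$. Hence a $k$-path forces YES with constant probability, while no $k$-path forces NO always.

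\emph{Main obstacle.} The delicate part is exactly this completeness argument: making the two independent sources of randomness cooperate — the scalars $a_{v}$ to isolate a single support $S^{\ast}$ via Schwartz--Zippel, and the vectors $r_{v}$ to force $\gamma_{S^{\ast}}\ne 0$ — and recognizing in advance that the position labels $y_{v,i}$ are indispensable for blocking characteristic-$2$ cancellations between different orderings of the same vertex set. The remaining point to get right is that the $\mathbb{Z}_{2}^{k}$-convolution runs in $O^{*}(2^{k})$ rather than $O(4^{k})$, since that is what pins the exponent at $2^{k}$.
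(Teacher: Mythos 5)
Your reduction is the same as the paper's (the walk-sum circuit with per-position labels $y_{v,i}$ to prevent characteristic-$2$ cancellation between different orderings of the same vertex set), but your detection step is genuinely different. The paper applies its black-box Theorem~\ref{TT}: the operator $\phi_k f=\sum_{J\subseteq[k]}f\bigl(\sum_{i\in J}z_i\bigr)$ annihilates non-multilinear and low-degree monomials and sends each multilinear monomial to a determinant $\det Z_{M_i}$; since distinct index sets give determinants with disjoint monomials, $\phi_k f\not\equiv 0$ iff a multilinear monomial exists, and one Schwartz--Zippel test on $\phi_k f$ (each point costing $2^k$ substitution queries to $f$) finishes the job. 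You instead run Williams' original group-algebra sieve: evaluate the circuit over $GF(2^\ell)[\mathbb{Z}_2^k]$ with $x_v\mapsto a_v(e_0+r_v)$, kill repeated vertices via $(e_0+r)^2=0$, and certify completeness through the linear-independence argument for the $r_v$ on a witness support $S^\ast$ plus two Schwartz--Zippel applications. Both are correct and both give $O^*(2^k)$; the paper's version is purely black-box in the circuit and needs no probabilistic linear-algebra argument (the determinants are nonzero deterministically), which is precisely the ``simplicity'' the paper is advertising, while yours requires the extra $\prod_{i=1}^k(1-2^{-i})>0.28$ independence step and a careful accounting of ring operations.

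One step in your writeup is wrong as stated, though the conclusion survives: you claim the $\mathbb{Z}_2^k$-convolution in $R=GF(2^\ell)[\mathbb{Z}_2^k]$ costs $O^*(2^k)$ ``via the Walsh--Hadamard transform.'' Over a field of characteristic $2$ the Walsh--Hadamard matrix is the all-ones matrix (since $-1=1$), so $H^2=2^kI=0$ and the transform is not invertible; you cannot diagonalize the convolution this way. The standard fixes are either (i) to observe that your circuit is skew --- every multiplication gate multiplies an accumulated group-algebra element by a $2$-sparse element $a_v\hat y_{v,i}(e_0+r_v)$, and such a product costs $O(2^k\,\mathrm{poly}(\ell))$ directly, which is all Williams needs --- or (ii) to lift the coefficients to $\mathbb{Z}[t]$, perform the transform over the rationals, and reduce modulo $2$ and the defining polynomial of $GF(2^\ell)$ at the end. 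Without one of these the running-time claim, which you yourself identify as the point that ``pins the exponent,'' is unjustified.
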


Bj\"orklund et. al. proved in \cite{B10,BHKK}

\begin{theorem}
The undirected $k$-path problem can be solved in time
$O^*(1.657^k)$ by a randomized algorithm with constant one sided
error.
\end{theorem}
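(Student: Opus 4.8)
The plan is to re-derive the bound by combining the black-box multilinear-monomial-detection primitive developed in this paper with the undirected-specific sieve of \cite{B10,BHKK}. First I would encode undirected $k$-path as a monomial-detection instance: for the input graph $G=(V,E)$, build an arithmetic circuit $C$ over a field of characteristic $2$ that computes the walk-generating polynomial $P(\{x_v\}_{v\in V},\{w_{e}\}_{e\in E})=\sum_{W}\big(\prod_{i=1}^{k}x_{v_i}\big)\big(\prod_{i=1}^{k-1}w_{v_iv_{i+1}}\big)$, where $W=(v_1,\dots,v_k)$ ranges over the walks $v_1\to\cdots\to v_k$ of $G$. Realizing $P$ by iterating the matrix $AD_x$ (here $A$ carries the edge weights $w_e$ and $D_x=\mathrm{diag}(x_v)$) a total of $\Theta(k)$ times gives $|C|=\mathrm{poly}(n)$ and $\deg P=k$. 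The coefficient of a squarefree degree-$k$ monomial $\prod_{v\in S}x_v$ equals $\sum_{H}\prod_{e\in H}w_{e}$ over the Hamiltonian paths $H$ of $G[S]$, so after substituting independent uniformly random field elements for the $w_e$, the Schwartz--Zippel lemma makes that coefficient nonzero with constant probability exactly when $G[S]$ contains a $k$-path, and it is identically zero otherwise. Thus, with constant one-sided error, $G$ has a $k$-path iff $P$ has a degree-$k$ multilinear monomial, and applying this paper's primitive to the black box $C$ decides the latter in $O^*(2^k)$ time --- which already gives $O^*(2^k)$ for undirected graphs and reproves Theorem~1 for directed graphs.

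Second, I would push the exponent down to $1.657^k$ by replacing the generic sieve inside the detector with the structured sieve of \cite{B10,BHKK}. The $2^k$ factor is exactly $|\mathbb{Z}_2^{k}|$, the number of group elements summed over, one coordinate per potential path vertex, and in the undirected case this is wasteful: the symmetry $w_{uv}=w_{vu}$ together with the reversibility of walks produces extra, \emph{structured} cancellations. Non-simple walks already cancel in pairs (swap the two labels at the first repeated vertex), and a walk and its reversal collapse together; since the reversal pairing would, if left unguarded, also annihilate the honest $k$-path monomials, one designs a labeling scheme that simultaneously breaks the reversal symmetry and constrains the labels of the $k$ path-vertices to respect a fixed perfect matching on the label set. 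Summing only over the labelings compatible with this matching still suffices for correctness, while Bj\"orklund's extremal set-counting estimate bounds the number of such labelings by $O^*(1.657^k)=2^{(1-\varepsilon)k}$ for a fixed $\varepsilon>0$. Each surviving labeling contributes a quantity computable in polynomial time (a determinant in Bj\"orklund's original formulation, one evaluation of the sieved polynomial in the narrow-sieve formulation); summing these contributions yields a value that is nonzero with constant probability iff $G$ has a $k$-path, and the running time is $O^*(1.657^k)$ because that is the number of polynomial-time steps.

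The hard part is the $1.657^k$ bound itself; everything else is routine once the primitive of this paper is available --- building $C$, the Schwartz--Zippel argument, the standard first-repeat involution, and amplifying the one-sided error. The real work is (i) to design the matching-respecting labeling so that \emph{exactly} the honest $k$-paths survive all cancellations --- in particular so that they are not killed by the reversal pairing --- and (ii) to prove the combinatorial estimate that at most $O^*(1.657^k)$ labelings survive, which is the heart of the \cite{B10,BHKK} analysis and the only place where a genuinely nontrivial counting argument is needed.
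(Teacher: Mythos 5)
Your first paragraph is fine as far as it goes, but it only reproves the $O^*(2^k)$ bound; the entire content of this theorem is the improvement to $1.657^k$, and your second paragraph does not supply it. You explicitly defer the two decisive steps --- designing the labeling so that exactly the honest paths survive, and the counting estimate giving $1.657^k$ --- to ``the heart of the \cite{B10,BHKK} analysis,'' so what remains is a pointer to the literature rather than a proof. Moreover, the mechanism you gesture at (a fixed perfect matching on the label set, an extremal bound of $O^*(1.657^k)$ on the number of surviving labelings, cancellation of non-simple walks by swapping labels at the first repeated vertex) is not the mechanism that works here, and it is unclear how it could: label-swapping cancellation presupposes that every path vertex carries its own sieve label, which is exactly the $2^k$ cost you are trying to beat.

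The actual route is different. One picks a uniformly random bipartition $V=V_1\cup V_2$ and attaches sieve variables only to the path's vertices in $V_1$ (say $r$ of them) and to its edges with both endpoints in $V_2$ (say $s$ of them); a dynamic program builds a circuit for the generating polynomial of the ``legitimate'' $(r,s)$-paths (those with no immediate backtrack $v_i,v_{i+1},v_i$ with $v_i\in V_2$, $v_{i+1}\in V_1$), and non-simple legitimate paths either fail multilinearity or cancel in characteristic $2$ under the involution that reverses the first cycle of the walk --- so reversal is the cancellation tool, not an obstruction to be ``guarded'' against. The black-box test of Theorem~\ref{TT} is then applied only to the $y,z$ variables, costing $2^{r+s}$ per trial, and one repeats $T=2^{k+1}/\bigl({r\choose k-r-s+1}{k-r\choose s}\bigr)$ trials so that, for a fixed simple $k$-path, some random partition realizes the chosen $(r,s)$ with constant probability. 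Minimizing $2^{r+s+k}/\bigl({r\choose k-r-s+1}{k-r\choose s}\bigr)$ at $r\approx 0.5k$, $s\approx 0.208k$ is precisely where $1.657^k$ comes from; none of this quantitative structure appears in your sketch, so the bound is asserted rather than derived.
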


Both algorithms are based on using a dynamic programming for
constructing an arithmetic circuit $f_G$ over a field of
characteristic $2$ where $f_G\not\equiv 0$ if and only if there is
a $k$-path in the graph. The constructions are in two stages. In
the first stage the algorithm construct an arithmetic circuit that
is equivalent to a multivariate polynomial that contains a
monomial $x_{i_1}\cdots x_{i_k}$ for every $k$-path $v_{i_1} \to
v_{i_2}\to \cdots \to v_{i_k}$ in the graph. In the second stage
the algorithm constructs a modified circuit where all
non-multilinear monomials are removed. Our algorithm takes the
first construction and then gives a simple black box test that
tests whether the multivariate polynomial contains a multilinear
monomial.

This gives a simple
randomized algorithm for the simple $k$-path problem for
directed graphs of the same complexity $O^*(2^k)$ as in Theorem~1
and with reusing the same ideas from \cite{BHKK} with the above
gives another solution (probably not simpler)
for undirected graphs of the same complexity
$O^*(1.657^k)$ as in Theorem~2.

\section{Main Result}
In this section we prove our main result.

Let $x=(x_1,\ldots,x_n)$ and $y=(y_1,\ldots,y_m)$ be two sets of
variables. A monomial over $x$ of degree $k$ is
$M:=x_{i_1}x_{i_2}\cdots x_{i_k}$ where $1\le i_1\le \cdots\le
i_k\le n$. When $i_1,i_2,\ldots,i_k$ are distinct then we call $M$
multilinear monomial. A monomial over $x$ and $y$ is $M:=M_1M_2$
where $M_1$ is a monomial over $x$ and $M_2$ is a monomial over
$y$. We say that $M$ is multilinear in $x$ if $M_1$ is
multilinear. Every arithmetic circuit over the field $F$ with the
inputs $x$ and $y$ can be expressed as a multivariate polynomial
$f$ in $F[x,y]$. The degree of $f$ in $x$ is the degree of $f$ in
$F[y][x]$. I.e. the degree $f$ when it is expressed as a
multivariate polynomial in $x$ with coefficients from $F[y]$.

Our main result is the following

\begin{theorem}\label{TT} Let $x$ and $y$ be two sets of variables.
Given a black box that contains an arithmetic circuit for $f\in
F[x,y]$ over a field $F$ of characteristic $2$ of degree $k$ in
$x$ and $d$ in $y$. There is a randomized algorithm with constant
one sided error, that runs in $O^*(poly(d)\cdot 2^k)$ time, asks
$2^k$ substitution queries and decides whether $f$ contains a
multilinear monomial in $x$ of degree $k$.
\end{theorem}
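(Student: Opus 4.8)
The plan is to combine a slot‑assignment substitution with the characteristic‑$2$ collapse of the permanent onto the determinant. I would introduce $k$ fresh ``slot'' variables $z_1,\dots,z_k$ and fresh scalars $\{y_{ij}\}_{i\in[n],j\in[k]}$, and work with the polynomial $g(x,y,z)$ obtained from $f$ by the substitution $x_i\mapsto x_i\bigl(\sum_{j=1}^{k}y_{ij}z_j\bigr)$. A monomial of $f$ whose $x$‑part has degree $<k$ contributes to $g$ a term of $z$‑degree $<k$, while a monomial with $x$‑part $x_{i_1}\cdots x_{i_k}$ and $y$‑part $M_y(y)$ contributes $x_{i_1}\cdots x_{i_k}M_y(y)\prod_{\ell=1}^{k}\bigl(\sum_{j}y_{i_\ell j}z_j\bigr)$, in which the coefficient of $z_1\cdots z_k$ is the permanent of the matrix $(y_{i_\ell,j})_{\ell,j}$. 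Over characteristic $2$ this permanent equals the determinant, which is $0$ the moment two of the $i_\ell$ coincide. Hence $[z_1\cdots z_k]\,g$ is a polynomial that vanishes identically exactly when $f$ has no multilinear‑in‑$x$ monomial of degree $k$, and otherwise equals $\sum_{M}c_M\bigl(\prod_{i\in\mathrm{supp}(M_x)}x_i\bigr)M_y(y)\,\det Y_{\mathrm{supp}(M_x)}$, summed over the degree‑$k$ multilinear‑in‑$x$ monomials $M=M_xM_y$ of $f$, where $Y_T$ is the $k\times k$ matrix with rows $(y_{i,\cdot})_{i\in T}$.

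The second step is to extract this coefficient with $2^k$ black‑box substitution queries. Because each $x_i$ is replaced by $x_i$ times a \emph{linear} form in $z$, the $z$‑degree of $g$ is at most $k$, so the only monomial of $g$ whose $z$‑support is all of $\{1,\dots,k\}$ is $z_1\cdots z_k$. The finite‑difference identity $[z_1\cdots z_k]\,g=\sum_{S\subseteq[k]}(-1)^{k-|S|}\,g|_{z_j=1\,(j\in S),\ z_j=0\,(j\notin S)}$ therefore recovers precisely this coefficient, and since $\mathrm{char}\,F=2$ all signs are $1$, so $[z_1\cdots z_k]\,g=\sum_{S\subseteq[k]}g|_{\mathbf 1_S}$. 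Each summand is the value of $f$ under the single substitution $x_i\leftarrow x_i\sum_{j\in S}y_{ij}$, i.e.\ one substitution query, and there are exactly $2^k$ choices of $S$.

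The algorithm is then: choose an extension field $F'\supseteq F$ with $|F'|$ polynomially large in $n,k,d$; pick random values in $F'$ for the $x_i$, the $y_\ell$, and the $y_{ij}$; for every $S\subseteq[k]$ query the box at $x_i\leftarrow x_i\sum_{j\in S}y_{ij}$; add the $2^k$ answers in $F'$; and report ``$f$ has a degree‑$k$ multilinear‑in‑$x$ monomial'' iff the sum is nonzero. Soundness is exact, by the first step. For completeness, the sum equals $\Phi:=[z_1\cdots z_k]\,g$ evaluated at the random point; grouping terms by the $x$‑support $T$ gives $\Phi=\sum_T\bigl(\prod_{i\in T}x_i\bigr)\bigl(\sum_{M:\,M_x=\prod_{i\in T}x_i}c_MM_y(y)\bigr)\det Y_T$, and since the $\det Y_T$ have pairwise disjoint monomial supports in the $y_{ij}$ (a monomial of $\det Y_T$ determines $T$ as its set of row indices), $\Phi\not\equiv 0$ as soon as some such $M$ has $c_M\neq 0$. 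As $\deg\Phi\le 2k+d$, Schwartz–Zippel over a large enough $F'$ yields constant one‑sided error, which can be amplified by repetition. The running time is $O^*(\mathrm{poly}(d)\cdot 2^k)$: $2^k$ circuit evaluations over $F'$, with the values $\sum_{j\in S}y_{ij}$ maintained incrementally across subsets.

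The main obstacle is the delicate balance in the first step: one needs the $z$‑degree of $g$ to stay $\le k$ so that the cheap multilinear‑extraction identity is legitimate, while simultaneously exploiting $\per=\det$ and the vanishing of a determinant with a repeated row to annihilate every non‑multilinear degree‑$k$ monomial; establishing $\Phi\not\equiv 0$ via support‑disjointness of the $\det Y_T$ is the other point requiring care. The remaining ingredients — constructing $F'$, the incremental evaluation, and the Schwartz–Zippel estimate — are routine.
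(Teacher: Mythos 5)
Your proposal is correct and is essentially the paper's own argument in different notation: summing the substitution $x_i\mapsto x_i\sum_{j\in S}y_{ij}$ over all $S\subseteq[k]$ is precisely the operator $\phi_k$ of equation~(\ref{pdf}) (with $z_{j,i}$ playing the role of your $x_iy_{ij}$), and your permanent-equals-determinant collapse together with the disjoint-support argument for the $\det Y_T$ is exactly Lemma~\ref{Ry} and the observation following equation~(\ref{bottle1}). The only cosmetic difference is that you rederive the key lemma via Ryser-style coefficient extraction rather than citing it.
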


Before we give the proof of the above theorem we introduce

\subsection{The Operator $\phi_k$} In this subsection we
introduce a notion from \cite{B}.

Let $\FF$ be any field of characteristic $2$. Consider a
multivariate polynomial $f\in\FF[x_1,\ldots,x_n]$ of degree~$k$.
Let $\bfz=(\bfz_1,\ldots,\bfz_k)$ where
$\bfz_{i}=(z_{i,1},\ldots,z_{i,n})$ are new indeterminates for
$i=1,\ldots,k$. Define the operator {$\phi_k:\FF[\bfx]\to
\FF[\bfz]$}{}
\begin{eqnarray}\labell{pdf}
\phi_k f = \sum_{J\subseteq [k]} f\left(\sum_{i\in J}
\bfz_i\right),
\end{eqnarray}
where $\sum_{i\in \emptyset}\bfz_i=0$. In \cite{B} Bshouty showed

\begin{lemma} \labell{Ry}
We have
\begin{enumerate}
\item \labell{Ry1}For a monomial $M$ that is non-multilinear of
degree $k$ we have $\phi_k M\equiv 0$.

\item \labell{Ry15}For a monomial $M$ of degree less than $k$ we
have $\phi_k M\equiv 0$.

\item \labell{Ry2}For multilinear monomial $M_\bfi=x_{i_1}\cdots
x_{i_k}$ of degree $k$ we have $\phi_kM_\bfi=\det\
Z_{M_\bfi}\not\equiv 0$ where
$$Z_{M_\bfi}(\bfz) =\left(\begin{array}{cccc}
z_{1,i_1}&z_{1,i_2}&\cdots&z_{1,i_k}\\
z_{2,i_1}&z_{2,i_2}&\cdots&z_{2,i_k}\\
\vdots& \vdots & \vdots & \vdots \\
z_{k,i_1}&z_{k,i_2}&\cdots&z_{k,i_k}\end{array}\right).$$
\end{enumerate}
\end{lemma}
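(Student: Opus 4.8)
The plan is to exploit linearity of $\phi_k$ and compute its action on a single monomial $M$. Write a degree-$d'$ monomial as $M=x_{j_1}x_{j_2}\cdots x_{j_{d'}}$, where the $j_t$ form a multiset of indices. Substituting $x_\ell\mapsto\sum_{i\in J}z_{i,\ell}$ and distributing turns $M(\sum_{i\in J}\bfz_i)$ into $\sum_{\sigma:[d']\to J}\prod_{t=1}^{d'}z_{\sigma(t),j_t}$, a sum over all functions $\sigma$ from the set of slots $[d']$ into $J$. Summing over all $J\subseteq[k]$ and interchanging the order of summation, each function $\sigma:[d']\to[k]$ contributes its monomial $\prod_t z_{\sigma(t),j_t}$ once for every $J$ that contains $\mathrm{range}(\sigma)$; there are exactly $2^{\,k-|\mathrm{range}(\sigma)|}$ such $J$. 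Hence $\phi_k M=\sum_{\sigma:[d']\to[k]}2^{\,k-|\mathrm{range}(\sigma)|}\prod_t z_{\sigma(t),j_t}$, and since $\FF$ has characteristic $2$ only the terms with $|\mathrm{range}(\sigma)|=k$ survive.

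This collapse disposes of two of the three claims almost at once. For a monomial of degree $d'<k$, every function $\sigma:[d']\to[k]$ satisfies $|\mathrm{range}(\sigma)|\le d'<k$, so no term survives and $\phi_k M\equiv0$, which is the degree-deficient case. For a degree-$k$ monomial the only surviving $\sigma$ are the bijections of $[k]$, i.e. the permutations $\sigma\in S_k$, so that $\phi_k M=\sum_{\sigma\in S_k}\prod_{t=1}^k z_{\sigma(t),j_t}$. If in addition $M$ is non-multilinear, some index repeats, say $j_a=j_b$ with $a\ne b$; I would then pair each $\sigma$ with $\sigma\circ(a\,b)$. This is a fixed-point-free involution on $S_k$, and because the two swapped slots $a,b$ carry the same index $j_a=j_b$, the paired permutations produce identical $\bfz$-monomials. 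Every monomial therefore occurs an even number of times and the sum vanishes in characteristic $2$, giving the non-multilinear case.

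Finally, for a multilinear $M_\bfi=x_{i_1}\cdots x_{i_k}$ the indices $i_t$ are distinct, so distinct permutations yield distinct $\bfz$-monomials (from the column label $i_t$ one recovers the row index $\sigma(t)$), whence the sum is a genuinely nonzero polynomial. Moreover $\sum_{\sigma\in S_k}\prod_t z_{\sigma(t),i_t}$ is exactly $\per Z_{M_\bfi}$, and since $\mathrm{sign}(\sigma)=1$ in characteristic $2$ the permanent equals $\det Z_{M_\bfi}$, which identifies the value and shows it is not identically zero. The only real content lies in the interchange-of-summation bookkeeping—correctly pinning down the multiplicity $2^{\,k-|\mathrm{range}(\sigma)|}$—together with the two uses of characteristic $2$: it annihilates all non-surjective contributions and converts the permanent into a determinant. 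The step requiring the most care is verifying that the involution $\sigma\mapsto\sigma\circ(a\,b)$ is simultaneously fixed-point-free and monomial-preserving, which is what forces the non-multilinear sum to cancel.
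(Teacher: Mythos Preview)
Your argument is correct in all three parts. The interchange-of-summation step producing the multiplicity $2^{\,k-|\mathrm{range}(\sigma)|}$ is exactly right, the involution $\sigma\mapsto\sigma\circ(a\,b)$ is indeed fixed-point-free (since $\sigma$ is a bijection, $\sigma(a)\ne\sigma(b)$) and monomial-preserving (since $j_a=j_b$), and the identification with $\det Z_{M_\bfi}$ via $\per=\det$ in characteristic~$2$ is clean.

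As for comparison: the paper does not actually prove this lemma---it simply attributes it to~\cite{B} and states the result. Your derivation is the natural one and is essentially Ryser's formula in disguise: over characteristic~$2$ the identity $\sum_{J\subseteq[k]}\prod_{t=1}^{k}\sum_{i\in J}a_{i,t}=\per(a_{i,t})_{i,t}$ is precisely what your surjectivity argument establishes, and this is also the viewpoint the paper adopts elsewhere (e.g.\ in its suppressed proof of Lemma~\ref{L3}). So there is no divergence in method to report; you have supplied the proof the paper omits.
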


Suppose
$$f(\bfx)=\sum_{\bfi\in I}\lambda_\bfi x_{i_1}\cdots
x_{i_k}+g(\bfx)$$ where $\bfx=(x_1,\ldots,x_n)$, $g(\bfx)$
contains non-multilinear monomials of degree $k$ and monomials of
degree less than $k$, $\bfi=(i_1,i_2,\ldots,i_k)$, $1\le  i_1<
i_2< \cdots< i_k\le n$, $\lambda_\bfi\not=0$ and $I\subset [n]^k$.
Since $\phi_k$ is linear we have
\begin{eqnarray}\labell{bottle1}
(\phi_k f)(\bfz_1,\ldots,\bfz_k)=\sum_{\bfi\in
I}\lambda_\bfi\cdot\det\left(Z_{M_\bfi}(\bfz_1,\ldots,\bfz_k)\right).
\end{eqnarray}
Notice that for two distinct $i^{(1)},i^{(2)}\in [n]^k$, the
monomials of $\det(Z_{M_{i^{(1)}}})$ and $\det(Z_{M_{i^{(2)}}})$
are disjoint. Therefore if $\deg(f)\le k$ then $\phi_k f\not\equiv
0$ if and only if $f$ contains a multilinear monomial of degree
$k$.

\subsection{Proof of Theorem~\ref{TT}}
Let $f\in F[x,y]$ where $F$ is a field of characteristic $2$.
Suppose
$$f(\bfx,y)=\sum_{\bfi\in I}\lambda_\bfi(y) x_{i_1}\cdots
x_{i_k}+g(\bfx,y)$$ is a multivariate polynomial of degree $k$ in
$x$ and $d$ in $y$ and $g(\bfx,y)$ is a multivariate polynomial
that contains monomials that are not multilinear in $x$ of degree
$k$, $\bfi=(i_1,i_2,\ldots,i_k)$, $1\le i_1< i_2< \cdots< i_k\le
n$, $\lambda_\bfi(y)\not\equiv 0$ and $I\subset [n]^k$. Then
\begin{eqnarray}\labell{bottle2}
(\phi_k f)(\bfz_1,\ldots,\bfz_k,y)=\sum_{\bfi\in
I}\lambda_\bfi(y)\det\left(Z_{M_\bfi}(\bfz_1,\ldots,\bfz_k)\right).
\end{eqnarray}
Therefore, $f(x,y)$ contains a multilinear monomial in $x$ if and
only if $\phi_kf\not\equiv 0$. Now since the degree of
$\phi_kf(x,y)$ is at most $d+k$, by Schwartz-Zippel zero testing
and since each substitution in $\phi_kf(x,y)$ can be simulated by
$2^k$ substitutions in $f(x,y)$ the result follows.

\section{William's Result}
The purpose of this section is to reduce the question of whether a
directed graph $G$ contains a simple $k$-path, to that of whether
a certain multivariate polynomial $f(x,y)$ contains a multilinear
monomial in $x$. We now describe this reduction.

Let $G(V,E)$ be a directed graph where $V=\{1,2,\ldots,n\}$. Let
$A$ be the adjacency matrix. Let $B^{(m)}$ be an $n\times n$
matrices, $m=2,\ldots,k$, such that $B^{(m)}_{i,j}=x_i\cdot
y_{m,i}\cdot A_{i,j}$ where $x_i$ and $y_{m,i}$ are
indeterminates. Let, $\bfy=(\bfy_1,\ldots,\bfy_{k})$ and $\bfy_m
=(y_{m,1},\ldots,y_{m,n})$. Let
$\bfx\ast\bfy=(x_1y_{1,1},\ldots,x_ny_{1,n})$. Consider the
polynomial $P_G(\bfx,\bfy)=\bfone B^{(k)}B^{(k-1)}\cdots
B^{(2)}(\bfx\ast\bfy)$. It is easy to see that

$$P_G(\bfx,\bfy)=\sum_{i_1\to i_2\to \cdots \to i_{k} \in G} x_{i_1}\cdots x_{i_{k}} y_{1,i_1}\cdots y_{k,i_{k}}$$
Obviously, no two paths have the same monomial in $P_G$.
Therefore, for any field, there is a simple $k$-path if and only
if $P_G(\bfx,\bfy)$ contains an multilinear monomial of degree
$k$. Now the result follows from Theorem~\ref{TT}

The algorithm is in Figure~1.

\begin{figure}
\singlespacing{ \boxer{
\begin{small}
\noindent{\bf Algorithm Direct Hamiltonian($G(V,E)$,$k$).}
\begin{tabbing}\label{fig}
Build the circuit $P_G(\bfx,\bfy)=\bfone B^{(k)}B^{(k-1)}\cdots
B^{(2)}(\bfx\ast\bfy)$\\
\ Test if $\phi_k(P_G(\bfx,\bfy))=\sum_{J\subseteq [k]}P_G(x,
\sum_{i\in J} z_i)\equiv 0$
using Schwartz-Zippel lemma.\\
\ If $\phi_k(P_G(\bfx,\bfy))\not\equiv 0$ answer ``YES'' and halt.\\
\ Answer ``NO''
\end{tabbing}

\end{small} } }
\caption{\sl An algorithm for simple $k$-path in undirected
graph.}
\end{figure}

\section{Bj\"orklund et. al. Result}
In this section we give Bj\"orklund et. al. \cite{BHKK} result.

\subsection{Preliminary Results}
Let $G(V,E)$ be an undirected graph with $n=|V|$ vertices. A
$k$-{\it path} is $v_0,v_1,\ldots,v_k$ such that
$\{v_i,v_{i+1}\}\in E$ for every $i=0,\ldots,k-1$. A $k$-path
$v_0,v_1,\ldots,v_k$ is called {\it simple} if the vertices in the
path are distinct. Notice here that unlike the previous definition
the length $k$ of the path is the number of edges (which is the
number of vertices$-1$) and not the number of vertices.

Let $V=V_1\cup V_2$ be a partition of $V$.  Let $E_1=E(V_1)$ and
$E_2=E(V_2)$ be the set of edges with both ends in $V_1$ and
$V_2$, respectively. Our goal is to find a simple $k$-path that
starts from some fixed vertex. For a path $p=v_0,v_1,\ldots, v_k$
we define the multiset of vertices in $p$ as
$V(p)=\{v_0,\ldots,v_k\}$ and the (undirected) edges in $p$ as the
multiset $E(p)=\{\{v_0,v_1\},\ldots,\{v_{k-1},v_k\}\}$. When we
write $V(p)\cap V_1$ (or $E(p)\cap E_2$) we mean the multiset that
contains the elements in $V(p)$ that are also in $V_1$.

Define for every edge $e\in E$ a variable $x_e$, for every vertex
$v\in V_1$ a variable $y_v$ and for every edge $e\in E_2$ a
variable $z_e$. Let $x=(x_e)_{e\in E}$ $y=(y_v)_{v\in V_1}$ and
$z=(z_e)_{e\in E_2}$. For every $k$-path $p=v_0,v_1,\ldots, v_k$
we define a monomial over any field of characteristic~$2$
$M_p=X_pY_pZ_p$ where $$X_p=\left(\prod_{e\in E(p)} x_e\right) ,
Y_p=\left(\prod_{v\in V(p)\cap V_1}y_v\right) \mbox{\ and\ }
Z_p=\left(\prod_{e\in E(p)\cap E_2} z_e\right).$$

Note here that if $e$ appears twice in $E(p)$ then $x_e$ appears
twice in $X_p$.

A path $p=v_0,v_1,\ldots,v_k$ is called {\it $(r,s)$-legitimate
$k$-path} with the partition $V=V_1\cup V_2$ if $|V(p)\cap
V_1|=r$, $|E(p)\cap E_2|=s$ and it contains no three consecutive
vertices $v_i,v_{i+1},v_{i+2}$ where $v_{i+2}=v_i$, $v_i\in V_2$
and $v_{i+1}\in V_1$. Fix a vertex $v_0\in V_1$. We denote by
${\cal L}_{k,r,s}(v_0,V_1,V_2)$ the set of all $(r,s)$-legitimate
$k$-paths in $G$ with the partition $V_1\cup V_2=V$ that starts
from $v_0\in V_1$. Define
$$F^{v_0,V_1,V_2}_{k,r,s}(x,y,z)=\sum_{p\in {\cal L}_{k,r,s}(v_0,V_1,V_2)}
M_p.$$

We now prove the following results.
\begin{lemma}\label{L1} Given an undirected graph $G=(V,E)$, a partition $V=V_1\cup
V_2$, $v_0\in V_1$ and two integers $s$ and $r$. There is a
deterministic polynomial time algorithm that construct a
polynomial size circuit for the function
$F^{v_0,V_1,V_2}_{k,r,s}(x,y,z)$.
\end{lemma}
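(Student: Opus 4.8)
The plan is to build the circuit for $F^{v_0,V_1,V_2}_{k,r,s}$ by a dynamic-programming / matrix-product recurrence analogous to the construction of $P_G$ in Section~3, but now carrying along three counters: the number of path-edges traversed so far, the number of visited vertices lying in $V_1$, and the number of traversed edges lying in $E_2$. The extra subtlety relative to the directed case is the legitimacy condition forbidding a pattern $v_i,v_{i+1},v_{i+2}$ with $v_{i+2}=v_i$, $v_i\in V_2$, $v_{i+1}\in V_1$; to enforce it the state of the recurrence must remember not only the current endpoint $u$ but also the previous vertex $w$ (so we can test whether the candidate next vertex equals $w$ and whether the forbidden membership pattern would occur). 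So the natural object is, for each triple of counters $(\ell,a,b)$ with $0\le \ell\le k$, $0\le a\le r$, $0\le b\le s$, and each ordered pair of adjacent vertices $(w,u)$, a polynomial $G_{\ell,a,b}[w,u]$ equal to the sum of $M_p$ over all walks $p$ of length $\ell$ from $v_0$ ending with the edge $w\to u$ that (i) have exactly $a$ vertices in $V_1$ counted with multiplicity, (ii) have exactly $b$ edges in $E_2$ counted with multiplicity, and (iii) contain no forbidden triple among their first $\ell$ steps.

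I would set up the base case at $\ell=1$ directly from the edges incident to $v_0$ (with the appropriate monomial factors $x_e$, $y_v$, $z_e$ attached, and the initial $y_{v_0}$ factor since $v_0\in V_1$), and then write the transition: $G_{\ell+1,a',b'}[u,t] = \sum_{w} G_{\ell,a,b}[w,u]\cdot x_{\{u,t\}}\cdot y_t^{[t\in V_1]}\cdot z_{\{u,t\}}^{[\{u,t\}\in E_2]}$, where $a'=a+[t\in V_1]$, $b'=b+[\{u,t\}\in E_2]$, and the sum ranges only over $w$ such that extending the walk by $u\to t$ does not create a forbidden triple — concretely, one excludes the term when $t=w$, $u\in V_2$, $w\in V_1$ (reading the triple $w,u,t=w$). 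Because all the predicates $[t\in V_1]$, $[\{u,t\}\in E_2]$, $[t=w]$ etc. are fixed once the graph and partition are given, each such transition is a finite sum with known coefficients, so it is implemented by a constant-size gadget of $+$ and $\times$ gates; chaining $k$ layers, with $O(n^2)$ endpoint-pairs and $O(k\cdot n^2)$ counter values at each layer, gives a circuit of size $\mathrm{poly}(n,k)$. Finally $F^{v_0,V_1,V_2}_{k,r,s}(x,y,z)=\sum_{w,u} G_{k,r,s}[w,u]$, one more polynomial-size summation layer.

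The main thing to get right is the bookkeeping of the legitimacy condition: one must check that tracking just the last edge $(w,u)$ in the state suffices to detect every forbidden triple (it does, since a forbidden triple is a local pattern in three consecutive vertices, and at the moment we append $t$ the relevant triple is exactly $(w,u,t)$), and that multiplicities in $V(p)$ and $E(p)$ are counted the way the definition of $M_p$ demands — in particular that a repeated edge contributes $x_e$ twice, which is automatic here because each traversal of an edge multiplies in its own $x_e$ factor. I expect the counter arithmetic and the base case (handling the $y_{v_0}$ factor and the case $v_0$ possibly being revisited) to be the only places where care is needed; once the recurrence is stated correctly, the claim that it yields a deterministic polynomial-time construction of a polynomial-size circuit is immediate from the layered structure.
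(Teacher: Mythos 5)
Your construction is essentially the paper's own proof: the paper also runs a dynamic program over states indexed by the last two vertices of the walk together with the counters (length so far, number of $V_1$-vertices, number of $E_2$-edges), observes that the legitimacy condition is a purely local pattern on three consecutive vertices and can therefore be enforced at each extension step by deleting one term from the sum over the third-to-last vertex, and finally sums the terminal states over all endpoint pairs. The state space, the layered circuit, and the polynomial size bound all match.

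There is, however, one concrete error in your legitimacy check, and it sits exactly at the spot you yourself flagged as ``the main thing to get right.'' The forbidden pattern is three consecutive vertices $v_i,v_{i+1},v_{i+2}$ with $v_{i+2}=v_i$, $v_i\in V_2$ and $v_{i+1}\in V_1$: the repeated endpoint lies in $V_2$ and the \emph{middle} vertex lies in $V_1$. In your triple $(w,u,t)$ with $t=w$, the middle vertex is $u$, so the term to exclude is the one with $t=w\in V_2$ and $u\in V_1$; you instead exclude the term with $u\in V_2$ and $w\in V_1$, i.e.\ you have swapped the roles of the two classes. (Compare the paper's recurrence, where the sum is restricted to $w\in N(u_1)\setminus\{u_2\}$ precisely in the case $u_1\in V_1$ and $u_2\in V_2$.) This is not a harmless slip for the downstream argument: with the condition as written, the circuit computes the generating polynomial of walks avoiding the \emph{wrong} local pattern, and the cancellation argument of Lemma~\ref{L2} --- which pairs each non-simple legitimate path with its reversal $\rho(p)$ and invokes legitimacy both to rule out the length-two cycle with $v_i\in V_2$, $v_{i+1}\in V_1$ and to verify that $\rho(p)$ is again legitimate --- would no longer go through. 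Once the two memberships are swapped back, your recurrence agrees with the paper's and the rest of your argument (locality of the check, automatic handling of edge multiplicities, polynomial count of states) is correct.
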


The following lemma follows from \cite{BHKK}. We give here the proof for completeness
\begin{lemma}\label{L2} $M_p=X_pY_pZ_p$ is a monomial in
$F^{v_0,V_1,V_2}_{k,r,s}(x,y,z)$ and $Y_pZ_p$ is multilinear
 if and only if $p$ is a $(r,s)$-legitimate simple $k$-path with
 the partition $V_1\cup V_2=V$ that starts from $v_0$.
\end{lemma}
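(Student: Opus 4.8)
\textbf{Proof proposal for Lemma~\ref{L2}.}
The plan is to prove the two implications separately: the direction ``$p$ is an $(r,s)$-legitimate simple $k$-path from $v_0$ $\Rightarrow$ $M_p$ survives in $F^{v_0,V_1,V_2}_{k,r,s}$ and $Y_pZ_p$ is multilinear'' is direct, while the converse needs a sign-reversing involution (``cycle reversal'') on the relevant set of paths, exploiting that $F$ has characteristic~$2$ so that paired terms cancel. Throughout I take $p=v_0,v_1,\ldots,v_k$ to start at the fixed vertex $v_0$, as is the standing convention for ${\cal L}_{k,r,s}(v_0,V_1,V_2)$.

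For the direction ($\Leftarrow$): if $p$ is an $(r,s)$-legitimate simple $k$-path from $v_0$, then $p\in{\cal L}_{k,r,s}(v_0,V_1,V_2)$, so $M_p$ is one of the terms defining $F^{v_0,V_1,V_2}_{k,r,s}$. Simplicity makes all vertices, hence all edges, of $p$ distinct, so $Y_p$ and $Z_p$ are multilinear. For the coefficient, any $q\in{\cal L}_{k,r,s}(v_0,V_1,V_2)$ with $M_q=M_p$ has $E(q)=E(p)$ as a multiset; since $E(p)$ is the edge set of a simple path in which $v_0$ has degree one, the only length-$k$ walk from $v_0$ using exactly these $k$ distinct edges is $p$ itself. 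Hence the coefficient of $M_p$ in $F^{v_0,V_1,V_2}_{k,r,s}$ is $1\ne 0$, so $M_p$ is a monomial of $F^{v_0,V_1,V_2}_{k,r,s}$.

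For the direction ($\Rightarrow$): suppose $M_p$ has nonzero coefficient in $F^{v_0,V_1,V_2}_{k,r,s}$ and $Y_pZ_p$ is multilinear. Put $S_p=\{q\in{\cal L}_{k,r,s}(v_0,V_1,V_2):M_q=M_p\}$; over characteristic~$2$ the hypothesis says $|S_p|$ is odd, so $S_p\ne\emptyset$, and every $q\in S_p$ has $Y_qZ_q=Y_pZ_p$ multilinear. I then define an involution $\iota$ on the set of $(r,s)$-legitimate $k$-paths from $v_0$ with multilinear $YZ$-part whose fixed points are exactly the simple ones. Given a non-simple such $q=u_0,\ldots,u_k$ (so $u_0=v_0$), let $j$ be the first index that repeats an earlier vertex $u_i$, $i<j$; then $u_0,\ldots,u_{j-1}$ are distinct, and since $Y_q$ is multilinear the repeated vertex $u_i$ must lie in $V_2$. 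The segment $u_i,\ldots,u_{j-1},u_j=u_i$ is a cycle $C$ of length $\ell=j-i$. Here $\ell=2$ is impossible: if the middle vertex $u_{i+1}\in V_1$ then $(u_i,u_{i+1},u_{i+2})$ is exactly the triple forbidden by legitimacy, and if $u_{i+1}\in V_2$ then the edge $\{u_i,u_{i+1}\}\in E_2$ is traversed twice, contradicting $Z_q$ multilinear. So $\ell\ge 3$, and I set $\iota(q)$ to be $q$ with the block from position $i$ to position $j$ reversed (i.e.\ $C$ traversed backwards). Reversing a cycle leaves $V(q)$, $E(q)$ and $E(q)\cap E_2$ unchanged as multisets, so $M_{\iota(q)}=M_q$, the parameters $r,s$ and the start vertex $v_0$ (note $i\ge 1$, as $u_0\in V_1$ and $u_i\in V_2$) are preserved; moreover $\iota(q)\ne q$ since $\ell\ge 3$, and one checks $\iota$ re-detects the same pair $(i,j)$, so $\iota^2=\mathrm{id}$.

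The step I expect to require the most care is verifying that $\iota(q)$ is again $(r,s)$-legitimate, i.e.\ that cycle reversal does not create a triple $v_a,v_{a+1},v_{a+2}$ with $v_{a+2}=v_a\in V_2$ and $v_{a+1}\in V_1$. Outside $[i,j]$ nothing changed; in the interior of the reversed length-$\ell\ge 3$ cycle no vertex equals the one two steps later; and at each of the two ``seams'' around positions $i$ and $j$ the only candidate bad triple has its ``$v_a$'' equal to $u_i\in V_2$ while the equality $v_{a+2}=v_a$ or the condition $v_{a+1}\in V_1$ fails — a short finite case analysis completes this. Granting the involution, $|S_p|\equiv \#\{\,\text{simple }q\in S_p\,\}\pmod 2$, so oddness of $|S_p|$ forces a simple legitimate $q\in S_p$. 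By the ($\Leftarrow$) analysis applied to $q$, the coefficient of $M_q=M_p$ comes solely from $q$, so $S_p=\{q\}$; finally $E(p)=E(q)$ together with $p$ starting at $v_0$ forces $p=q$ by the unique-traversal argument above. Hence $p$ is an $(r,s)$-legitimate simple $k$-path with the partition $V_1\cup V_2=V$ starting from $v_0$, which proves the lemma.
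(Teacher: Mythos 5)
Your proof is correct and follows essentially the same route as the paper: the same direct argument plus edge-multiset uniqueness for the ($\Leftarrow$) direction, and the same first-cycle-reversal involution with characteristic-$2$ cancellation for ($\Rightarrow$) (the paper calls your $\iota$ the map $\rho$, and likewise disposes of the length-$2$ cycle and the $V_1$-repeat cases by non-multilinearity of $Y_pZ_p$ or $Z_p$). The only step you compress is the legitimacy of the reversed path, which the paper carries out as an explicit three-case analysis at the seams --- exactly the finite check you defer.
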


The following follows immediately from Theorem~\ref{TT}.

\begin{lemma}\label{L3} There is a randomized algorithm with constant, one sided
error, that runs in time $O^*(2^{r+s})$ for the following decision
problem: Given a black box for the multivariate polynomial
$f(x,y,z):=F^{v_0,V_1,V_2}_{k,r,s}(x,y,z)$ over a field of
characteristic $2$, decides whether $f$ contains a monomial
$M_p=X_pY_pZ_p$ where $Y_pZ_p$ is multilinear.
\end{lemma}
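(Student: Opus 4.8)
The plan is to reduce this directly to Theorem~\ref{TT} by identifying the correct ``$x$-variables'' (those whose multilinearity we test) and ``$y$-variables'' (the remaining ones). In the notation of Theorem~\ref{TT}, I would take the role of the $x$-block to be played by the variables $(y_v)_{v\in V_1}$ together with $(z_e)_{e\in E_2}$, and the role of the $y$-block to be played by the edge variables $(x_e)_{e\in E}$. First I would observe that, by Lemma~\ref{L1}, we have in hand a polynomial-size arithmetic circuit for $f(x,y,z)=F^{v_0,V_1,V_2}_{k,r,s}(x,y,z)$, so black-box (substitution) access is available. Next I would record the degree bounds: every monomial $M_p=X_pY_pZ_p$ appearing in $f$ comes from an $(r,s)$-legitimate $k$-path $p$, so by definition $|V(p)\cap V_1|=r$ and $|E(p)\cap E_2|=s$; hence $Y_pZ_p$ has degree exactly $r+s$ in the combined variable set $(y,z)$, and $X_p$ has degree exactly $k$ in the $x$-variables. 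Thus $f$, viewed as a polynomial in the ``$x$-block'' $(y,z)$ with coefficients in $F[x]$, has degree exactly $r+s$ in that block, and bounded (by $k$, a polynomial quantity for our purposes) degree in the remaining block.

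With this identification, Theorem~\ref{TT} applies verbatim: it gives a randomized, constant one-sided-error algorithm running in $O^*(\mathrm{poly}(k)\cdot 2^{r+s})$ time, using $2^{r+s}$ substitution queries, that decides whether $f$ contains a multilinear monomial of degree $r+s$ in the block $(y,z)$. It remains only to check that ``$f$ contains a multilinear monomial of degree $r+s$ in $(y,z)$'' is the same event as ``$f$ contains a monomial $M_p=X_pY_pZ_p$ with $Y_pZ_p$ multilinear.'' One direction is immediate. For the converse I would argue as follows: every monomial of $f$ has $(y,z)$-degree exactly $r+s$ (by the degree computation above), so \emph{any} monomial of $f$ that is multilinear in $(y,z)$ is automatically of degree exactly $r+s$ in $(y,z)$; and such a monomial, being of the form $X_pY_pZ_p$, has $Y_pZ_p$ multilinear. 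Conversely if some $M_p$ has $Y_pZ_p$ multilinear then that monomial is multilinear of degree $r+s$ in $(y,z)$. (Here I should note that distinct paths $p$ can in principle yield the same $(y,z)$-part $Y_pZ_p$ while differing in $X_p$, and also that non-multilinear monomials could in principle cancel; but cancellation can only make $f$ contain \emph{fewer} monomials, and Theorem~\ref{TT} already accounts for cancellation since it reasons about the polynomial $f$ as presented, via the operator $\phi_{r+s}$ and equation~(\ref{bottle2}). So no extra care beyond invoking Theorem~\ref{TT} is needed.)

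Putting these together: apply Theorem~\ref{TT} with $k$ replaced by $r+s$ and with the two variable sets as above, using the circuit from Lemma~\ref{L1}; by the degree identity the resulting test is exactly a test for the existence of a monomial $M_p=X_pY_pZ_p$ in $f$ with $Y_pZ_p$ multilinear; the running time is $O^*(\mathrm{poly}(k)\cdot 2^{r+s})=O^*(2^{r+s})$ with constant one-sided error, as claimed.

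The main obstacle is conceptual rather than technical: one must carefully separate the variables into the ``tested for multilinearity'' block $(y,z)$ and the ``ignored'' block $x$, and then verify that the degree in the tested block is \emph{exactly} $r+s$ on every monomial, so that ``multilinear of degree $r+s$ in $(y,z)$'' coincides with ``$Y_pZ_p$ multilinear'' — and in particular that a multilinear $(y,z)$-part cannot hide inside a monomial whose $(y,z)$-degree is smaller than $r+s$. Once the degree bookkeeping is pinned down, everything else is a direct invocation of Theorem~\ref{TT}.
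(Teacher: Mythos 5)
Your proposal is correct and matches the paper's approach: the paper itself asserts that Lemma~\ref{L3} ``follows immediately from Theorem~\ref{TT},'' and you have simply made that reduction explicit, including the key bookkeeping that every monomial of $F^{v_0,V_1,V_2}_{k,r,s}$ has degree exactly $r+s$ in the combined block $(y,z)$. The only cosmetic difference is that the paper's algorithm applies the operator with the $y$- and $z$-blocks kept separate (summing over $R\subseteq[r]$ and $S\subseteq[s]$, giving a product of an $r\times r$ and an $s\times s$ determinant) whereas you merge them into a single block of size $r+s$; both use $2^{r+s}$ substitutions and are equivalent for the purposes of the lemma.
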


\noindent {\bf Proof of Lemma~\ref{L1}}. For any two vertices
$u_1,u_2\in V$ we define ${\cal L}_{k,r,s}(v_0,V_1,V_2,u_1,u_2)$
the set of all $(r,s)$-legitimate $k$-paths that start with $v_0$
and end with $u_1,u_2$. I.e, $u_2$ is the last node in the path
and $u_1$ proceeds it. Define
$$F^{v_0,V_1,V_2,u_1,u_2}_{k,r,s}=\sum_{p\in {\cal
L}_{k,r,s}(v_0,V_1,V_2,u_1,u_2)} M_p.$$ Then
$$F^{v_0,V_1,V_2}_{k,r,s}(x,y,z)=\sum_{u_1,u_2\in V}F^{v_0,V_1,V_2,u_1,u_2}_{k,r,s}.$$

We now show, using dynamic programming, that
$F^{v_0,V_1,V_2,u_1,u_2}_{k,r,s}$ can be computed in polynomial
time. For a vertex $v$ let $N(v)$ be the neighbor vertices of $v$.
For a predicate $A$ we define $[A]=1$ if $A$ is true and $0$
otherwise. Now it is easy to verify the following recurrence
formula
\begin{enumerate}
\item If $k\ge 2$, $r\le k+1,s\le k$ and $\{u_1,u_2\}\in E$ then
\begin{eqnarray*}
F^{v_0,V_1,V_2,u_1,u_2}_{k,r,s}&=& [u_2\in V_1]\cdot
x_{\{u_1,u_2\}}y_{u_2} \sum_{w\in
N(u_1)}F^{v_0,V_1,V_2,w,u_1}_{k-1,r-1,s}\\
&&+[u_2\in V_2\wedge u_1\in V_2 ]\cdot
x_{\{u_1,u_2\}}z_{\{u_1,u_2\}} \sum_{w\in
N(u_1)}F^{v_0,V_1,V_2,w,u_1}_{k-1,r,s-1}\\
&&+[ u_2\in V_2\wedge u_1\in V_1]\cdot x_{\{u_1,u_2\}} \sum_{w\in
N(u_1)\backslash \{u_2\}}F^{v_0,V_1,V_2,w,u_1}_{k-1,r,s}\\
\end{eqnarray*}
\ignore{If $\{u_1,u_2\}\not\in E$ then
$F^{v_0,V_1,V_2,u_1,u_2}_{k,r,s}\equiv 0$.}

\item If $k=1$, $u_1=v_0$, $u_2\in V_2$, $r=1$ and $s=0$ then
$F^{v_0,V_1,V_2,u_1,u_2}_{k,r,s}=x_{\{v_0,u_2\}}y_{v_0}$.

\item If $k=1$, $u_1=v_0$, $u_2\in V_1$, $r=2$ and $s=0$ then
$F^{v_0,V_1,V_2,u_1,u_2}_{k,r,s}=x_{\{v_0,u_2\}}y_{v_0}y_{u_2}$.

\item Otherwise $F^{v_0,V_1,V_2,u_1,u_2}_{k,r,s}=0$.
\end{enumerate}

Since $u_1,u_2,k,r,s$ can take at most $k^2(k+1)n^2$ different
values the above recurrence can be computed in polynomial
time.\qed
\begin{eqnarray*}
\end{eqnarray*}
 \noindent {\bf Proof of
Lemma~\ref{L2}}. ($\Leftarrow$) Let $p=v_0,v_1,\ldots,v_k$ be any
$(r,s)$-legitimate simple $k$-path with the partition $V_1\cup
V_2=V$. Then $p\in {\cal L}_{k,r,s}(v_0,V_1,V_2)$. Since $p$ is
simple $Y_pZ_p$ is multilinear. We now need to show that no other
path $p'$ satisfies $M_{p'}=M_p$. If $M_p=M_{p'}$ then
$X_p=X_{p'}$ and since $p$ is simple and starts from $v_0$ by
induction on the path, $p\equiv p'$. Therefore $M_p$ is a
multilinear monomial in $F^{v_0,V_1,V_2}_{k,r,s}(x,y,z)$

($\Rightarrow$) We now show that all the monomials that correspond
to $(r,s)$-legitimate non-simple $k$-path $p=v_0,v_1,\ldots,v_k$
with the partition $V_1\cup V_2=V$ either vanish (because the
field is of characteristic~$2$) or are not multilinear.

Consider a $(r,s)$-legitimate non-simple $k$-path with the
partition $V_1\cup V_2=V$. Consider the first circuit $C$ in this
path. If $C=v_i,v_{i+1},v_i$ then either $v_i\in V_1$ and then
$Y_p$ contains $y_{v_i}^2$ or $v_i,v_{i+1}\in V_2$ and then $Z_p$
contains $z_{\{v_i,v_{i+1}\}}^2$. Notice that $p$ is legitimate
and therefore the case $v_i\in V_2$ and $v_{i+1}\in V_1$ cannot
happen.

Now suppose $|C|>2$, $C=v_i,v_{i+1},\ldots,v_{j},
v_{j+1}(=v_{i})$. Define $p_1=v_1,\ldots,v_{i-1}$ and
$p_2=v_{j+2},\ldots,v_k$. Then $p=p_1Cp_2$. If $v_i\in V_1$ then
$Y_p$ contains $y_{v_i}y_{v_{j+1}}=y_{v_i}^2$. Therefore we may
assume that $v_i\in V_2$. Define the path
$$\rho(p):=p_1C'p_2=\underline{v_0,v_1,\ldots,v_{i-1}},
\underline{v_i,v_{j},v_{j-1},\ldots,v_{i+1},v_{i}},
\underline{v_{j+2},v_{j+3},\ldots,v_{k}}.$$ We now show that
\begin{enumerate}
\item $\rho(\rho(p))=p$.

\item $\rho(p)$ is $(r,s)$-legitimate non-simple $k$-path with the
partition $V_1\cup V_2=V$ that starts with $v_0$.

\item $\rho(p)\not=p$ and $M_p=M_{\rho(p)}$.
\end{enumerate}
This implies that $M_p$ vanishes from
$F^{v_0,V_1,V_2}_{k,r,s}(x,y,z)$ because the characteristic of the
field is $2$. Let $p'=\rho(p)$. Since $C$ is the first circuit in
$p$ we have $v_0,v_1,\ldots,v_j$ are distinct and $v_{j+1}=v_i$.
This implies that $C'$ is the first circuit in $p'$ and therefore
$\rho(\rho(p))=\rho(p')=p$. This implies 1.

Obviously, $|V(p')\cap V_1|=|V(p)\cap V_1|=r$ and $|E(p')\cap
E_2|=|E(p)\cap E_2|=s$. Suppose $p'$ contains three consecutive
vertices $u,w,u$ such that $w\in V_1$ and $u\in V_2$. Then since
$p$ is $(r,s)$-legitimate path and $v_i\in V_2$ we have three
cases

\noindent {\bf Case I.} $u,w,u$ is in $C'$. Then $u,w,u$ is in $C$
which contradict the fact that $p$ is legitimate.

\noindent {\bf Case II.} $u=v_{i-2}\in V_2,w=v_{i-1}\in V_1$ and
$v_{i-2}=v_i$. In this case $C''=v_{i-2},v_{i-1},v_{i}$ is a
circuit in $p$ and then $p$ is not legitimate. A contradiction.

\noindent {\bf Case III.} $u=v_{i}\in V_2, w=v_{j+2}\in V_1$ and
$v_{j+3}=v_{i}$. In this case, $C''=v_{i}v_{j+2}v_{j+3}$ is a
circuit in $p$ and then $p$ is not legitimate. A contradiction.

This proves 2.

If $p= p'$ then $C=C'$ and since $|C|>2$ we get a contradiction.
Therefore $p\not= p'$. Since $E(p)=E(p')$, $V(p')\cap V_1=V(p)\cap
V_1$ and $E(p')\cap E_2=E(p)\cap E_2$ we also have $M_p=M_{p'}$.
This proves 3.\qed

\ignore{\noindent {\bf Proof of Lemma~\ref{L3}}. Consider the new
indeterminates $y^{(i)}=(y_{i,v})_{v\in V_1}$, ${i\in [r]}$ and
$z^{(j)}=(z_{j,e})_{e\in E_2}$, $j\in [s]$ where
$[r]=\{1,2,\ldots,r\}$. Consider the operator
$$\Phi(f)=\sum_{S\subseteq [s]}
\sum_{R\subseteq [r]}f \left(x, \sum_{i\in R} y^{(i)},\sum_{j\in
S} z^{(j)}\right).$$ If $f(x,y,z)=\sum_{p\in P}X_pY_pZ_p$ then
$\Phi(f)=\sum_{p\in P} X_p \Phi(Y_pZ_p)$. If $Y_p=y_{v_1}\cdots
y_{v_r}$ and $Z_p=z_{e_1}\cdots z_{e_s}$ then, by Ryser formula
for permanent and since permanent in field of characteristic $2$
is equal to determinant we have
\begin{eqnarray*}
\Phi(Y_pZ_p)&=&\sum_{S\subseteq [s]} \sum_{R\subseteq
[r]}\left(\prod_{i_1=1}^r \sum_{i_2\in R}
y_{i_2,v_{i_1}} \prod_{j_1=1}^s \sum_{j_2\in S} z_{j_2,e_{j_1}}\right)\\
&=&\left( \sum_{R\subseteq [r]} \prod_{i_1=1}^r \sum_{i_2\in R}
y_{i_2,v_{i_1}}\right)\left( \sum_{S\subseteq [s]}
\prod_{j_1=1}^s \sum_{j_2\in S} z_{j_2,e_{j_1}}\right)\\
&=&\left( \sum_{R\subseteq [r]}(-1)^{r-|R|} \prod_{i_1=1}^r
\sum_{i_2\in R} y_{i_2,v_{i_1}}\right)\left( \sum_{S\subseteq
[s]}(-1)^{s-|S|}
\prod_{j_1=1}^s \sum_{j_2\in S} z_{j_2,e_{j_1}}\right)\\
&=&\per\left( y_{i_2,v_{i_1}}\right)_{i_1,i_2\in [r]} \per\left(
z_{j_2,v_{j_1}}\right)_{j_1,j_2\in [s]}=\det\left(
y_{i_2,v_{i_1}}\right)_{i_1,i_2\in [r]} \det\left(
z_{j_2,v_{j_1}}\right)_{j_1,j_2\in [s]}.
\end{eqnarray*}
Now if in $Y_p$ (or $Z_p$) we have $y_{v_a}=y_{v_b}$, or
equivalently $v_a=v_b$, for some $a\not= b$ then $\det\left(
y_{i_2,v_{i_1}}\right)_{i_1,i_2\in [r]}=0$ and $\Phi(Y_pZ_p)\equiv
0$. If $Y_pZ_p$ is multilinear then $\Phi(Y_pZ_p)\not\equiv 0$.
Therefore $\Phi(f)\not\equiv 0$ if and only if $f$ contains a
monomial $M_p=X_pY_pZ_p$ where $Y_pZ_p$ is multilinear.

Now since substitution in $\Phi(f)$ can be simulated by $2^{r+s}$
substitutions in $f$ and the degree of $\Phi(f)$ is $k+r+s\le
3k+1$ we can randomly zero test $f$ in time
$O(poly(k,n)2^{r+s})$~\cite{DL,S,Z}. \qed}

\subsection{The Algorithm}

The following lemma is proved in \cite{BHKK} we give its proof for
completeness.

\begin{lemma}\label{prob}
Let $p=v_0,v_1,\ldots,v_k$ be a simple path. For a partition
$V_1,V_2$ selected uniformly at random where $v_0\in V_1$,
$$\Pr_{V_1,V_2}\left(|V(p)\cap V_1|=r,|E(p)\cap
E_2|=s\right)=2^{-k}{r \choose k-r-s+1}{k-r \choose s}.$$
\end{lemma}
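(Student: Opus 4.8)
The plan is to reduce the probability computation to a pure counting problem on binary strings. Since $v_0$ is forced into $V_1$, selecting the random partition is the same as assigning each of the remaining $k$ vertices $v_1,\ldots,v_k$ independently and uniformly to $V_1$ or $V_2$; there are thus $2^k$ equally likely outcomes, which accounts for the factor $2^{-k}$, and it remains to count the outcomes with $|V(p)\cap V_1|=r$ and $|E(p)\cap E_2|=s$.

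First I would encode each outcome as a string $b_0b_1\cdots b_k\in\{0,1\}^{k+1}$, where $b_i=1$ iff $v_i\in V_1$; hence $b_0=1$ always. Under this encoding $|V(p)\cap V_1|$ is the number of $1$'s in the string, and, since an edge $\{v_i,v_{i+1}\}$ lies in $E_2$ exactly when $b_i=b_{i+1}=0$, the quantity $|E(p)\cap E_2|$ is the number of indices $i\in\{0,\ldots,k-1\}$ with $b_ib_{i+1}=00$, counted with overlaps. So the task becomes: count strings in $\{0,1\}^{k+1}$ that start with $1$, contain exactly $r$ ones (hence $k+1-r$ zeros), and contain exactly $s$ occurrences of the pattern $00$.

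The key step is the run bookkeeping. Group the zeros into maximal runs; if there are $t$ runs of lengths $\ell_1,\ldots,\ell_t\ge 1$ with $\ell_1+\cdots+\ell_t=k+1-r$, then a run of length $\ell_j$ contributes exactly $\ell_j-1$ occurrences of $00$, so the total is $\sum_j(\ell_j-1)=(k+1-r)-t$. Therefore ``$s$ occurrences of $00$'' is equivalent to ``$t=k+1-r-s$ runs of zeros''. Now, because the string begins with $1$, the $r$ ones create exactly $r$ admissible gaps for inserting a (nonempty) run of zeros, namely the gap immediately after the $j$-th one for $j=1,\ldots,r$ (there is no gap before the first one). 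Choosing which $t$ of these $r$ gaps are occupied gives $\binom{r}{t}$ choices, and distributing the $k+1-r$ zeros among these $t$ nonempty runs is a composition of $k+1-r$ into $t$ positive parts, of which there are $\binom{k-r}{t-1}$. Multiplying and substituting $t=k+1-r-s=k-r-s+1$, together with $\binom{k-r}{t-1}=\binom{k-r}{k-r-s}=\binom{k-r}{s}$, yields the count $\binom{r}{k-r-s+1}\binom{k-r}{s}$, and dividing by $2^k$ gives the stated formula.

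The only real subtlety is keeping the gap count honest: since $b_0=1$ is forced, no run of zeros may precede $v_0$, so there are $r$ gaps and not $r+1$ (using $r+1$ would produce the wrong, spuriously symmetric, formula). Degenerate parameter ranges (for instance $t>r$, or $k+1-r<t$, which occur precisely when the binomial coefficients in the statement vanish) correspond to no admissible string and hence to probability $0$, in agreement with the formula, so no separate case analysis is needed.
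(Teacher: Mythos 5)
Your proof is correct and is essentially the paper's own argument in different notation: your maximal runs of zeros and the $r$ gaps after the ones are exactly the paper's quantities $\bar s_1,\ldots,\bar s_r$ (the numbers of $V_2$-vertices following each $V_1$-vertex), and both proofs derive $t=k+1-r-s$ from the same two identities and finish with the same $\binom{r}{t}$ times a stars-and-bars count. No substantive difference.
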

\begin{proof} We will count the number of partitions $V_1,V_2$
that satisfies $|V(p)\cap V_1|=r$, $|E(p)\cap E_2|=s$ and $v_0\in
V_1$. Obviously, the probability in the lemma is $2^{-k}$ times
the number of such partitions.

Let $V_1,V_2$ be a partition such that $|V(p)\cap V_1|=r$,
$|E(p)\cap E_2|=s$ and $v_0\in V_1$.  Let
$v_{i_1}=v_0,v_{i_2},\ldots,v_{i_r}$ be the nodes in $V_1$. Let
$\bar s_j\ge 0$, $j=1,\ldots,r-1$ be the number of nodes in $V_2$
that are between $v_{i_j}$ and $v_{i_{j+1}}$. Let $\bar s_r$ be
the number of nodes in $V_2$ that are after $v_{i_r}$. Let $t$ be
the number of $\bar s_i$ that are not zero. For $j<r$ the number
of edges in $E_2$ that are between $v_{i_j}$ and $v_{i_{j+1}}$ is
$ s_i:=\max(\bar s_i-1,0)$. The number of edges in $E_2$ that are
after $v_{i_r}$ is $s_r:=\max(\bar s_r-1,0)$. Therefore
\begin{eqnarray}\label{kaer}
\sum_{i=1}^r\bar s_i=\sum_{i=1}^r s_i+t=s+t.\end{eqnarray}

Since the number of nodes in the path is
\begin{eqnarray}\label{kaer2}
k+1=r+\sum_{i=1}^r \bar s_i=r+s+t\end{eqnarray} we must have
$t=(k+1)-(r+s)$.

Now any partition that satisfies $\sum_{i=1}^r \bar s_i=s+t$ and
$t=(k+1)-(r+s)$ must also satisfy $|V(p)\cap V_1|=r$ and
$|E(p)\cap E_2|=s$. Therefore the number of such partitions is
equal to the number ways of writing $s+t$ as $\bar s_1+\bar
s_2+\cdots+\bar s_r$ where exactly $t$ of them are not zero. We
first select those $\bar s_{j_1},\ldots,\bar s_{j_t}$ that are not
zero. This can be done in ${r\choose t}$ ways. Then the number of
ways of writing $s+t$ as $\bar s_{j_1}+\cdots+\bar s_{j_t}$ where
$\bar s_{j_i}\ge 1$ is equal to the number of ways of writing $s$
as $x_1+\cdots+x_t$ where $x_i\ge 0$. The later is equal to
${t+s-1\choose t-1}.$ Therefore the number of such partitions is
$${t+s-1\choose t-1}{r\choose t}={k-r\choose s}{r\choose k-r-s+1}.$$

\end{proof}

We now give the algorithm.

The algorithm is in Figure~2. In the algorithm we randomly
uniformly choose a partition $V=V_1\cup V_2$ where $v_0\in V_1$.
This is done $T$ times for each vertex $v_0\in V$. If
$p=v_0,v_1,\ldots,v_k$ is simple path then by Lemma~\ref{prob},
the probability that no partition satisfies $|V(p)\cap V_1|=r$ and
$|E(p)\cap E_2|=s$ is at most
$$\left(1-2^{-k}{r \choose k-r-s+1}{k-r \choose s}\right)^T\le \frac{1}{4}.$$
Then by Lemma~\ref{L1}, $f=F^{v_0,V_1,V_2}_{k,r,s}(x,y,z)$ can be
constructed in $poly(n)$ time. By Lemma~2 $f$ has a multilinear
monomial. Then, by Lemma~\ref{L3}, this can be tested with
probability at least $3/4$. Therefore, if there is a simple path
then the algorithm answer ``YES'' with probability at least $1/2$.
If there is no simple path then, by Lemma~\ref{L2}, for every
$v_0\in V$ and every partition $V_1\cup V_2$,
$f=F^{v_0,V_1,V_2}_{k,r,s}(x,y,z)$ has no multilinear monomial. By
Lemma~\ref{L3}, $\phi_{|S|+|R|}(f)\equiv 0$ and the answer is
``NO'' with probability $1$.

\begin{figure}
\singlespacing{ \boxer{
\begin{small}
\noindent{\bf Algorithm Hamiltonian($G(V,E)$,$k$,$r$,$s$).}
\begin{tabbing}\label{fig}
For every $v_0\in V$\\
\ \ \ For $i=1$ to $T:=2^{k+1}/\left({r \choose k-r-s+1}{k-r \choose s}\right)$\\
\ \ \ \ \ \ \ Choose a random uniform partition $V=V_1\cup V_2$ where $v_0\in V_1$\\
\ \ \ \ \ \ \ Build the circuit $f=F^{v_0,V_1,V_2}_{k,r,s}(x,y,z)$
using
Lemma~\ref{L1}\\
\ \ \ \ \ \ \ Test if $\phi_{|S|+|R|}(f)=\sum_{S\subseteq [s]}
\sum_{R\subseteq [r]}f \left(x, \sum_{i\in R} y^{(i)},\sum_{j\in
S}
z^{(j)}\right)\equiv 0$\\
\ \ \ \ \ \ \ \ \ \ \ \ \ \ \ \ \ \ \ \ \ \ \ \ \ \ \ \
\ \ \ \ \ \ \ \ \ \ \ \ \ \ \ \ \ \ \ \ \ \ \ \ \ \ \ \  using Schwartz-Zippel lemma.\\
\ \ \ \ \ \ \ If $\phi_{|S|+|R|}(f)\not\equiv 0$ answer ``YES'' and halt.\\
Answer ``NO''
\end{tabbing}

\end{small} } }
\caption{\sl An algorithm for simple $k$-path in undirected graph.}
\end{figure}

This proves the following
\begin{lemma} Let $G$ be undirected graph. Algorithm {\bf Hamiltonian} {\rm ($G(V,E)$,$k$,$r$,$s$)}
runs in time
\begin{eqnarray}\label{fc}
O\left(\frac{2^{r+s+k}\cdot poly(n)}{{r \choose k-r-s+1}{k-r
\choose s}}\right)
\end{eqnarray}
and satisfies the following.
If $G$ contains a simple path
of length $k$ then {\bf Hamiltonian} {\rm ($G(V,E)$,$k$,$r$,$s$)} answer ``YES''
with constant probability. If $G$ contains no simple path of length $k$ then
{\bf Hamiltonian}{\rm ($G(V,E)$,$k$,$r$,$s$)} answer ``NO'' with probability $1$.
\end{lemma}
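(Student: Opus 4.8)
The plan is to assemble the final lemma directly from the pieces already established in the section, treating it purely as a bookkeeping exercise. The statement has three parts: a running-time bound, one-sided correctness when a simple $k$-path exists, and correctness when none exists. First I would address the running time. For a fixed $v_0$ the inner loop runs $T = 2^{k+1}/\binom{r}{k-r-s+1}\binom{k-r}{s}$ iterations; each iteration builds the circuit $f = F^{v_0,V_1,V_2}_{k,r,s}$ in $\mathrm{poly}(n)$ time by Lemma~\ref{L1}, and then runs the zero-test of $\phi_{|S|+|R|}(f)$, which by Lemma~\ref{L3} costs $O^*(2^{r+s})$. Multiplying the $n$ choices of $v_0$, the $T$ iterations, the $\mathrm{poly}(n)$ construction cost, and the $O^*(2^{r+s})$ test cost gives exactly the bound in~(\ref{fc}), absorbing the $n$ and the construction cost into the $\mathrm{poly}(n)$ factor.

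Next I would handle the ``YES'' case. Suppose $G$ has a simple $k$-path $p = v_0,v_1,\ldots,v_k$. Fix its starting vertex $v_0$; this vertex is one of the $n$ values iterated over. By Lemma~\ref{prob}, for a uniformly random partition with $v_0\in V_1$ the event $|V(p)\cap V_1| = r$ and $|E(p)\cap E_2| = s$ has probability $2^{-k}\binom{r}{k-r-s+1}\binom{k-r}{s}$, so across $T$ independent trials the probability that this event never occurs is at most $(1 - 2^{-k}\binom{r}{k-r-s+1}\binom{k-r}{s})^T \le e^{-2} < 1/4$ by the choice of $T$. Conditioned on some trial hitting the right $(r,s)$, the path $p$ is an $(r,s)$-legitimate simple $k$-path starting from $v_0$ (a simple path is trivially legitimate since it has no repeated vertices at all), so by Lemma~\ref{L2} the polynomial $f = F^{v_0,V_1,V_2}_{k,r,s}$ contains a monomial $M_p = X_pY_pZ_p$ with $Y_pZ_p$ multilinear. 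Then by Lemma~\ref{L3} the zero-test detects this with probability at least $3/4$. A union bound over the two failure modes (never hitting $(r,s)$, or hitting it but the Schwartz--Zippel test failing) gives overall success probability at least $1 - 1/4 - 1/4 = 1/2$, i.e.\ constant probability of answering ``YES''.

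For the ``NO'' case the argument is deterministic. If $G$ has no simple $k$-path, then for every $v_0\in V$ and every partition $V = V_1\cup V_2$ with $v_0\in V_1$, there is no $(r,s)$-legitimate simple $k$-path from $v_0$; by the contrapositive of Lemma~\ref{L2}, $f = F^{v_0,V_1,V_2}_{k,r,s}$ contains no monomial $M_p = X_pY_pZ_p$ with $Y_pZ_p$ multilinear. By Lemma~\ref{L3}'s underlying identity, $\phi_{|S|+|R|}(f) \equiv 0$ identically, so the Schwartz--Zippel test evaluates a polynomial that is identically zero and never reports a nonzero value, regardless of its random choices. Hence the algorithm always reaches the final line and answers ``NO'' with probability $1$.

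The only genuine subtlety, and the step I would be most careful about, is the interface with Lemma~\ref{L3}: one must check that the probability-$3/4$ detection guarantee in Lemma~\ref{L3} does not degrade when it is invoked $nT$ times, since in the ``YES'' case we only need \emph{one} invocation (the lucky one) to succeed, and the probability bound there is already the post-amplification bound for a single black-box query instance. Thus no further union bound over the $nT$ calls is needed in the ``YES'' direction, and in the ``NO'' direction the test is error-free. I would also note explicitly that the $e^{-2}$ bound requires $1 - q \le e^{-q}$ applied with $q = 2^{-k}\binom{r}{k-r-s+1}\binom{k-r}{s}$ and $T = 2/q$, which is where the factor $2^{k+1}$ rather than $2^k$ in the definition of $T$ comes from.
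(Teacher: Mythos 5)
Your proof is correct and follows essentially the same route as the paper: bound the running time by $n\cdot T\cdot(\mathrm{poly}(n)+O^*(2^{r+s}))$, use Lemma~\ref{prob} to show some trial hits the right $(r,s)$ with probability at least $3/4$, combine with Lemmas~\ref{L1}, \ref{L2} and \ref{L3} for the ``YES'' case, and use the contrapositive of Lemma~\ref{L2} together with the identity $\phi_{|S|+|R|}(f)\equiv 0$ for the error-free ``NO'' case. Your write-up is in fact slightly more careful than the paper's, since you justify the bound $(1-q)^T\le e^{-2}<1/4$ explicitly and note why no union bound over the $nT$ invocations of Lemma~\ref{L3} is needed.
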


Now to minimize (\ref{fc}) we choose $r=\lfloor 0.5\cdot k\rfloor$
and $s=\lfloor 0.208\cdot k\rfloor$ and get the result.

\end{document}